\newcommand{\E}{{E}}
\newcommand{\eqdef}{\triangleq}
\newcommand{\intens}{\EuScript{I}}
\newcommand{\pot}{\EuScript{P}}
\DeclareMathAlphabet{\pazocal}{OMS}{zplm}{m}{n}
\def\bdm#1\edm{\begin{displaymath}#1\end{displaymath}}
\def\be#1\ee{\begin{equation}#1\end{equation}}
\def\barr#1\earr{\begin{align}#1\end{align}}
\newtheorem{proposition}{Proposition}
\newcommand{\IeeeTAP}{{\em IEEE Trans.\ Antennas Propag.\/}}
\title{Optimal transmit field distribution for partially obstructed \\ continuous 
radiating surfaces in near-field
communication systems
}
\name{Francesco Verde$^{1}$, 
Donatella Darsena$^{2}$, 
Marco Di Renzo$^{3,4}$, and
Vincenzo Galdi$^{5}$
\thanks{The work of D.~Darsena and V.~Galdi was partially supported by 
the European Union-Next Generation EU under the Italian
National Recovery and Resilience Plan (NRRP), Mission 4,
Component 2, Investment 1.3, CUP E63C22002040007, partnership
on ``Telecommunications of the Future" (PE00000001
- program ``RESTART"). 
The work of M. Di Renzo was supported in part by the European Union under the Horizon Europe projects COVER (101086228), UNITE (101129618), INSTINCT (101139161), and TWIN6G (101182794); by the Agence Nationale de la Recherche (ANR) through the France 2030 project Networks of the Future (ANR-PEPR NF-SYSTERA 22-PEFT-0006); by the CHIST-ERA project PASSIONATE (CHIST-ERA-22-WAI-04/ANR-23-CHR4-0003-01); and by the UK Engineering and Physical Sciences Research Council (EPSRC) and the Department for Science, Innovation and Technology (EP/X040569/1).
}
}
\address{$^{1}$University of Campania ``Luigi Vanvitelli'', I-81031 Aversa, Italy \\
 $^{2}$University of Naples Federico II, I-80125 Naples, Italy\\
$^{3}$Universit\'e Paris-Saclay, CNRS, CentraleSup\'elec,  91192 Gif-sur-Yvette, France, \\
$^{4}$King's College London, WC2R 2LS London, United Kingdom \\
$^{5}$University of Sannio, I-82100 Benevento, Italy
}
\begin{document}
\maketitle
\vspace{-2mm}
\begin{abstract}
\vspace{-2mm}
This paper deals with the optimal synthesis of aperture fields for (radiating) near-field communications in obstructed environments. A physically consistent model based on knife-edge diffraction is used to formulate the problem as a maximization in Hilbert space. The optimal solution is obtained as a matched filter that ``matches" the shape of a diffraction-induced kernel, thus linking wave propagation with signal processing methods. 
The framework supports hardware implementation using continuous apertures such as metasurfaces or lens antennas. 
This approach bridges physically grounded modeling, signal processing, and hardware design for efficient energy focusing in near-field obstructed channels.
\end{abstract}

\vspace{-2mm}
\section{Introduction}
\label{sec:intro}
\vspace{-2mm}

Continuous aperture arrays have recently emerged as a promising paradigm for next-generation wireless communications \cite{Liu-2025}, enabling the exploitation of nearly continuous electromagnetic (EM) apertures to approach the fundamental physical limits of spatial degrees 
of freedom \cite{Mikki-2023}. Unlike conventional spatially discrete antenna arrays, 
their continuous counterparts lead to an integral representation of the propagation channel, allowing for fine-grained control of amplitude and phase distributions over the entire aperture surface. 
Recently, hardware implementations of continuous aperture arrays
based on lens antennas \cite{Lee2025} or metasurfaces composed of densely spaced subwavelength elements  \cite{Malevich.2025} have been developed, thus making continuous apertures practically feasible.
Such a shift has motivated the development of new signal processing formulations that depart from classical matrix-based 
multiple-input multiple-output (MIMO) models and rely instead on operator-theoretic approaches.

In near-field communication scenarios \cite{Petrov-2024, Cui-2023,Singh-2025,Dar_ArXiV2025}, sharp structural discontinuities such as building edges or panels can partially obstruct the line-of-sight path (LoS). To model the resulting diffraction effects in a physically consistent yet tractable way, this work adopts the  knife-edge model \cite{Goodman,Orfanidis-2002}, which captures wavefront distortion and energy redistribution through a closed-form diffraction kernel. This allows accurate representation of obstruction effects while preserving analytical simplicity for operator-based beamforming design.

In such a non-LoS (NLoS) context, optimal beamforming and energy focusing  in the (radiating) near-field region can be formulated as an optimization problem in Hilbert space, where the optimal aperture field is the filter matched to the diffraction-induced kernel. 
The resulting framework provides a unified perspective that connects diffraction physics,  optimization, and hardware-oriented beamforming design
in the near-field region, paving the way for robust near-field communication schemes in NLoS propagation environments.

\vspace{-2mm}
\section{Knife-Edge Diffraction}
\label{sec:knife-diffraction}
\vspace{-2mm}

With reference to  Fig.~\ref{fig:fig_1}, in order to evaluate the effect of an obstacle on the evolution of the wavefront, we adopt the {\em knife-edge diffraction} model. In this formulation, the obstacle is modeled as a perfectly conducting edge, infinitesimally thin and sharply defined, positioned at $z_\text{b}>0$ from the transmit aperture plane. 
The obstructing surface is oriented perpendicular to the direct propagation path between transmitter and receiver, extending infinitely in the $y$-direction, and covering the transverse range $x\in[x_\text{b}^{(1)},x_\text{b}^{(2)}]$. This simplified geometry ensures that diffraction arises solely from the sharp edge, which acts as a source of secondary wavelets in accordance with the Huygens–Fresnel principle \cite{Goodman}.
For clarity, we represent the transmitter as a one-dimensional flat aperture extending from $x_\text{a}^{(1)}$ to $x_\text{a}^{(2)}$ along the transverse  $x$-axis, located in the plane $z=0$, and assumed infinite in the  $y$-direction (see Fig.~\ref{fig:fig_1}). This idealization represents a rectangular strip antenna with length along the $y$-axis
much larger than its width $\Delta x_\text{a} \eqdef x_\text{a}^{(2)} 
- x_\text{a}^{(1)}$. The transmit aperture is modeled as a spatially continuous radiating surface, which can be practically realized through lens antennas \cite{Lee2025} or metasurfaces formed by densely packed subwavelength elements \cite{Malevich.2025}.

\begin{figure}[t]
\centering
\includegraphics[width=0.8\linewidth]{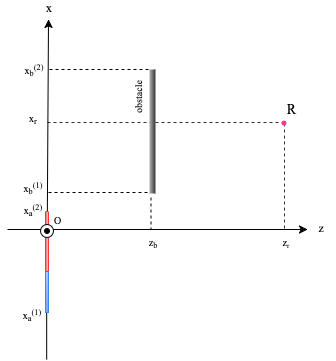} 
\caption{Knife-edge diffraction setup and geometry.}
\label{fig:fig_1}
\end{figure}

As a first step, we describe the propagation process from the aperture plane at  
$z=0$ up to the obstructing plane located at $z=z_\text{b}$. 
In a linear, isotropic, homogeneous, and nondispersive dielectric medium, both 
the electric and magnetic field vectors satisfy the same vector wave equation \cite{Goodman}. Consequently, each Cartesian component obeys an identical scalar wave equation and 
EM propagation can be fully described by a single scalar wave
function.
We specifically examine the case of a $y$-polarized purely monochromatic electric field
\be
\widetilde{\E}(z,x;t) = \Re 
\left\{\E(z,x) \, e^{ j 2 \pi f_0 t} \right\} \:,
\label{eq:rf}
\ee 
propagating along the positive longitudinal  $z$-axis from $z=0$ to $z=z_\text{b}$
and orthogonal to the transverse  $x$-direction, with no variation along $y$, 
where $\Re\{\cdot\}$ denotes the real part, $f_0 >0$ is the carrier frequency,  
$\lambda_0=\tfrac{c}{f_0}$ is the corresponding wavelength in vacuum, 
with $c\approx 3 \cdot 10^8$ m/s being the speed of light in vacuum. 
Let $k_0 = \tfrac{2 \pi}{\lambda_0}$ denote the wave number, 
\be
\E(z,x)=u(z,x) \, e^{-j k_0 z} \:,
\ee
represents the complex envelope of the wave for $0 < z < z_\text{b}$.
Under the {\em paraxial approximation}, i.e., when the observation point lies  close to the propagation axis $z$ \cite{Goodman,Orfanidis-2002}, the field $u(z,x)$ 
in the radiative near field region of the transmitting aperture 
is  given by the {\em Huygens-Fresnel diffraction formula}
\be
u(z,x) = \sqrt{\frac{j}{\lambda_0 z}}  \int_{x_\text{a}^{(1)}}^{x_\text{a}^{(2)}} \E_\text a(\nu) \, e^{-j \frac{k_0}{2 z} (x-\nu)^2} \, \mathrm{d}\nu  \:,
\label{eq:int}
\ee
where $\E_\text a(\nu)$ denotes the aperture field distribution.

The second step pertains the calculation of the complex envelope 
of the diffracted field in the region $z > z_b$.
The baseband field transmitted beyond the obstacle is governed by the Rayleigh–Sommerfeld integral
\be
\E_\text{d}(z,x) = 
\frac{1}{2 j} \int_{\mathcal{O}_\text{d}} 
\frac{k_0 \, (z-z_\text{b})}{\rho_\text{d}(\xi)} \, \E(z_\text{b},\xi)\, 
\text{H}^{(2)}_1(k_0 \, \rho_\text{d}(\xi)) \, \mathrm{d}\xi \:, 
\label{eq:diff-field}
\ee
for $z>z_\text{b}$, where 
$\text{H}^{(2)}_1(\cdot)$ is the first-order second-kind Hankel function
and the integration domain is defined as
\be
\mathcal{O}_\text{d} \eqdef (-\infty,x_\text{b}^{(1)}) \cup (x_\text{b}^{(2)},+\infty)
\ee
and the distance function is
\be
\rho_\text{d}(\xi) \eqdef \sqrt{(x-\xi)^2+(z-z_\text{b})^2} \:.
\ee
In the paraxial regime, i.e., for $|x-\xi | \ll z-z_\text{b}$, and using the large-argument asymptotic expansion of the Hankel function given by (see, e.g., \cite{Orfanidis-2002})
\be
H^{(2)}_1(u) \approx -\sqrt{\frac{2}{j \pi u}} \, e^{-j u} \:,
\label{eq:approx-hank}
\ee
the diffracted field in \eqref{eq:diff-field} reduces, for $z>z_\text{b}$, 
to the form
\be
\E_\text{d}(z,x)=u_\text{d}(z,x) \, e^{-j k_0 z} \:,
\label{eq:field-decomp-diff}
\ee
with the envelope expressed as
\be
u_\text{d}(z,x) = \sqrt{\frac{j}{\lambda_0 (z-z_\text{b})}}
\int_{\mathcal{O}_\text{d}} u(z_\text{b},\xi) \,  e^{-j \frac{k_0}{2 (z-z_\text{b})} (x-\xi)^2} \, \mathrm{d}\xi  \: .
\label{eq:pert-ap}
\ee

By substituting \eqref{eq:int} in \eqref{eq:pert-ap}, after exchanging
the order of integration, one has the compact form
\be
u_\text{d}(z,x)  =
\int_{x_\text{a}^{(1)}}^{x_\text{a}^{(2)}} K(z,x,\nu) \, \E_\text a(\nu)  \, \mathrm{d}\nu
\ee
for $z>z_\text{b}$, where the {\em kernel} $K(z,x,\nu)$ encompasses all propagation and diffraction effects between the aperture coordinate $\nu$ and the observation point $(z,x)$. In the case of a knife-edge obstacle at $z=z_b$, the kernel takes the form
\begin{multline}
K(z,x,\nu) = \frac{j}{\lambda_0 \sqrt{z_\text{b}\,  (z-z_\text{b})}}
\int_{\mathcal{O}_\text{d}} 
e^{-j \frac{k_0}{2 z_\text{b}} (\xi-\nu)^2} 
\\ \cdot 
 e^{-j \frac{k_0}{2 (z-z_\text{b})} (x-\xi)^2} \, \mathrm{d}\xi \:.
\label{eq:K-def}
\end{multline}
The following proposition provides a closed-form expression of the kernel
$K(z,x,\nu) $, which is useful for subsequent optimization purposes. 

\vspace{-1mm}
\begin{proposition}
\label{prop:1}
It results that
\be
K(z,x,\nu) = \frac{1}{2} \sqrt{\frac{j}{\lambda_0 \, z}} \, 
e^{-j \, \frac{k_0}{2 \, z} \, (\nu-x)^2} \, F(z,x,\nu)
\label{eq:prop-K}
\ee
where the knife-edge factor $F(z,x,\nu)$ is defined in \eqref{eq:factor-F}.
\begin{figure*}[!t]
\normalsize
\begin{multline}
F(z,x,\nu) \eqdef
1 + \text{Erf}\left(\sqrt{\frac{j \, k_0 \, z}{2 \, z_\text{b} \, (z-z_\text{b})}} 
\left[ x_\text{b}^{(1)} - \frac{z_\text{b} \, x}{z} - \frac{(z-z_\text{b}) \, \nu}{z}\right]\right)
\\ 
+ \text{Erfc}\left(\sqrt{\frac{j \, k_0 \, z}{2 \, z_\text{b} \, (z-z_\text{b})}} \left[ x_\text{b}^{(2)} 
- \frac{z_\text{b} \, x}{z} - \frac{(z-z_\text{b}) \, \nu}{z} \right]\right) \:.
\label{eq:factor-F}
\end{multline}
\hrulefill
\end{figure*}
\end{proposition}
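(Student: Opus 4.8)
The plan is to evaluate the Gaussian-type integral over $\xi\in\mathcal{O}_\text{d}$ in \eqref{eq:K-def} directly. First I would combine the two quadratic exponents into a single quadratic form in $\xi$: writing $a \eqdef \frac{k_0}{2z_\text{b}}$ and $b \eqdef \frac{k_0}{2(z-z_\text{b})}$, the exponent is $-j\,[a(\xi-\nu)^2 + b(x-\xi)^2]$, which upon completing the square in $\xi$ becomes $-j\,(a+b)\bigl(\xi - \xi_0\bigr)^2 - j\,\frac{ab}{a+b}(x-\nu)^2$, where $\xi_0 \eqdef \frac{a\nu + bx}{a+b}$. A short computation gives $a+b = \frac{k_0 z}{2 z_\text{b}(z-z_\text{b})}$ and $\frac{ab}{a+b} = \frac{k_0}{2z}$, so the $\xi$-independent factor is exactly $e^{-j\frac{k_0}{2z}(x-\nu)^2}$, matching the prefactor in \eqref{eq:prop-K}. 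Likewise $\xi_0 = \frac{z_\text{b} x + (z-z_\text{b})\nu}{z}$, which will produce the shifted arguments appearing inside the error functions in \eqref{eq:factor-F}.

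Next I would handle the remaining integral $\int_{\mathcal{O}_\text{d}} e^{-j(a+b)(\xi-\xi_0)^2}\,\mathrm{d}\xi$ by splitting the domain as $(-\infty,x_\text{b}^{(1)})$ plus $(x_\text{b}^{(2)},+\infty)$ and substituting $t = \sqrt{j(a+b)}\,(\xi-\xi_0)$ in each piece (with the principal branch of the square root, valid in the paraxial/Fresnel regime). Each piece reduces to a complementary-error-function integral: recalling $\mathrm{Erfc}(w) = \frac{2}{\sqrt{\pi}}\int_w^{\infty} e^{-t^2}\mathrm{d}t$ and $\mathrm{Erf}(w) = 1 - \mathrm{Erfc}(w)$, together with $\int_{-\infty}^{\infty} e^{-t^2}\mathrm{d}t = \sqrt{\pi}$, one gets
\be
\int_{\mathcal{O}_\text{d}} e^{-j(a+b)(\xi-\xi_0)^2}\,\mathrm{d}\xi
= \frac{1}{2}\sqrt{\frac{\pi}{j(a+b)}}\,\Bigl[\,1 + \mathrm{Erf}\!\bigl(\sqrt{j(a+b)}\,(x_\text{b}^{(1)}-\xi_0)\bigr) + \mathrm{Erfc}\!\bigl(\sqrt{j(a+b)}\,(x_\text{b}^{(2)}-\xi_0)\bigr)\Bigr].
\ee
Substituting $a+b = \frac{k_0 z}{2z_\text{b}(z-z_\text{b})}$ shows $\sqrt{j(a+b)} = \sqrt{\frac{j k_0 z}{2 z_\text{b}(z-z_\text{b})}}$, the exact coefficient in \eqref{eq:factor-F}, and $x_\text{b}^{(i)} - \xi_0 = x_\text{b}^{(i)} - \frac{z_\text{b}x}{z} - \frac{(z-z_\text{b})\nu}{z}$, the bracketed expressions there. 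This identifies the bracket as $F(z,x,\nu)$.

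Finally I would assemble the pieces: multiplying the prefactor $\frac{j}{\lambda_0\sqrt{z_\text{b}(z-z_\text{b})}}$ from \eqref{eq:K-def} by $e^{-j\frac{k_0}{2z}(x-\nu)^2}$, by $\frac12\sqrt{\frac{\pi}{j(a+b)}} = \frac12\sqrt{\frac{2\pi z_\text{b}(z-z_\text{b})}{j k_0 z}}$, and by $F(z,x,\nu)$, and then simplifying the constants using $k_0 = \frac{2\pi}{\lambda_0}$. The $\sqrt{z_\text{b}(z-z_\text{b})}$ factors cancel and the remaining scalar collapses to $\frac12\sqrt{\frac{j}{\lambda_0 z}}$, yielding \eqref{eq:prop-K}. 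The main obstacle is bookkeeping with the complex square roots and branch choices: one must check that completing the square is legitimate as an oscillatory (Fresnel) integral — formally justified by inserting a vanishing convergence factor $e^{-\epsilon\xi^2}$ and taking $\epsilon\to0^+$ — and that the phases of $\sqrt{j}$, $\sqrt{j(a+b)}$, and $\sqrt{\frac{j}{\lambda_0 z}}$ are mutually consistent so that no spurious sign enters; everything else is routine Gaussian-integral algebra.
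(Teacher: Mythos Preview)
Your proposal is correct and follows exactly the approach the paper indicates (but omits for space): recognize the integral in \eqref{eq:K-def} as a Gaussian, complete the square in $\xi$, and express the truncated integrals over $(-\infty,x_\text{b}^{(1)})$ and $(x_\text{b}^{(2)},+\infty)$ via the complex-argument error and complementary error functions. Your identification of $a+b$, $\tfrac{ab}{a+b}$, $\xi_0$, and the final collapse of constants to $\tfrac{1}{2}\sqrt{j/(\lambda_0 z)}$ all check out, so there is nothing to add beyond the branch-consistency caveat you already flagged.
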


\begin{proof}
The integral in \eqref{eq:K-def} is Gaussian and can be 
evaluated in closed form using the error function with complex argument \cite{Irmak-2019}.
Details are omitted due to the lack of space.
\end{proof}
It should be noted that, when the upper edge of the obstacle tends to infinity, i.e., 
$x_\text{b}^{(2)}  \to + \infty$, the Erfc in \eqref{eq:factor-F} disappears
and, thus, the expression of $K(z,x,\nu)$ simplifies. 

\vspace{-2mm}
\section{Optimal field on the aperture}
\label{sec:optimization}
\vspace{-2mm}

The determination of the aperture field that maximizes the diffracted intensity at 
the receiver  can be rigorously cast as an optimization problem in a Hilbert space 
framework.  We consider a {\em pointwise receiver}, i.e., the receiver 
is idealized as infinitesimally localized; in this case, the aim is to maximize the intensity 
of the diffracted field  at a single spatial location, 
subject to a power constraint on the aperture.

For a pointwise receiver located at point $R \equiv (z_\text{r},x_\text{r})$, 
the cost function is the field intensity
\be
\intens_\text{r} \eqdef  |u_\text{d}(z_\text{r},x_\text{r}) |^2 =
\left| \int_{x_\text{a}^{(1)}}^{x_\text{a}^{(2)}} K(z_\text{r},x_\text{r},\nu) \, \E_\text a(\nu)  \, \mathrm{d}\nu
 \right |^2  \:.
\label{eq:intensity}
\ee
Physically, in scalar diffraction theory, $\intens_\text{r}$ is related, up to a scalar constant,  
to the time-averaged power flux per unit area carried by the wave. Maximization of 
$\intens_\text{r}$ with respect to
the aperture field $E_\text a(\nu)$ is carried out under the power constraint
\be
\int_{x_\text{a}^{(1)}}^{x_\text{a}^{(2)}} \left| \E_\text a(\nu) \right|^2  \, \mathrm{d}\nu= 
\pot_\text{a} < + \infty \:.
\label{eq:power-constr}
\ee

Maximization of  $\intens_\text{r}$ with respect to $E_\text a(\nu)$ subject to constraint 
\eqref{eq:power-constr} can be readily carried out in the Hilbert space $L^2(\mathcal A)$
of square-integrable complex aperture field distributions over the aperture domain 
$\mathcal{A} \eqdef [x_\text{a}^{(1)},x_\text{a}^{(2)}]$. In such a space,  the inner product is defined as 
\be
\langle f, g \rangle \eqdef \int_{x_\text{a}^{(1)}}^{x_\text{a}^{(2)}} f^*(\nu) \, g(\nu) \, 
\mathrm{d}\nu
\ee
whose associated norm is $\|f\| \eqdef \sqrt{\langle f, f \rangle}$.
The propagation kernel $K_\text{r}(\nu) \eqdef K(z_\text{r},x_\text{r},\nu)$ 
associated with the receiver point $R$ defines a bounded linear 
functional on $L^2(\mathcal A)$. Therefore,  by virtue of
\eqref{eq:int},  we may write  
$\intens_\text{r} = \left |\langle K_\text{r}^*,  \E_\text a \rangle \right|^2$
and invoke the Cauchy-Schwarz inequality 
$\intens_\text{r} \le \pot_\text{a} \, \| K_\text{r}\|^2$, 
where equality holds if and only if $\E_\text a(\nu)$ is proportional to the complex conjugate of
$K_\text{r}(\nu)$. Therefore, accounting for the power constraint \eqref{eq:power-constr}, 
the optimal solution is given by
\be
\E_\text a^\text{opt}(\nu) =  \frac{\sqrt{\pot_\text{a}}}{\| K_\text{r}\|} \, K_\text{r}^*(\nu) 
\label{eq:opt-Ea}
\ee
where $K_\text{r}(\nu)$ can be analytically obtained from Proposition~\ref{prop:1}.
This choice yields the maximum intensity at the receiver
\be
\intens_\text{r}^\text{max}= \pot_\text{a} \, \| K_\text{r}\|^2 \:.
\ee
Up to a scalar constant, the optimal solution \eqref{eq:opt-Ea} is the {\em phase-conjugated kernel}, i.e.,  the {\em matched filter} that retrofocuses energy onto $R$. 
Physically, this corresponds to the field distribution at the aperture that would be 
observed if a point source were placed at $R$ and back-propagated to the aperture plane. 
Strictly speaking, feeding the aperture with $\E_\text a^\text{opt}(\nu) $  ensures
constructive interference at the receiver point.

At this point, it is interesting to investigate the EM feature of
the wave radiated from the aperture when the aperture 
field distribution is given by \eqref{eq:opt-Ea}. The
starting point of this study consists of replacing 
$\E_\text a(\nu)$ in the  Huygens-Fresnel diffraction formula
\eqref{eq:int} with $\E_\text a^\text{opt}(\nu)$.
Let the optimal aperture field distribution be
decomposed as $\E_\text{a}^\text{opt}(\nu)= A_\text{a}^\text{opt}(\nu) \, 
e^{-j \Phi_\text a^\text{opt}(\nu)}$.
An approximate evaluation of the Huygens-Fresnel diffraction 
integral can be obtained via the stationary-phase method \cite{Wong-2001}, which states 
that the dominant contribution arises from points where the derivative of 
the total phase of the integrand 
\be
Q_\nu(z,x) \eqdef \Phi_\text a^\text{opt}(\nu)+
\frac{k_0}{2 z} (x-\nu)^2.
\label{eq:Q}
\ee
with respect to $\nu$, vanishes. Mathematically, the condition for phase stationarity 
$\tfrac{\partial \, Q_\nu(z,x)}{\partial \nu} =0$ yields
\be
x_\nu(z) = \nu + \frac{z}{k_0} \, \frac{\rm d}{\rm d  \nu} \Phi_\text a^\text{opt}(\nu) \:, 
\quad \text{for $\nu\in \mathcal{A}$} \: .
\label{eq:rays}
\ee
This equation describes a family of rays, each parameterized 
by the transverse coordinate $\nu$ 
on the aperture. 
The envelope of the family of curves described by \eqref{eq:rays}
defines a {\em caustic}, i.e., the locus where neighboring rays 
intersect or coalesce. This occurs at points  $(z,x)$ where
\be
\frac{\partial \, x_\nu(z)}{\partial \nu}  = 0
\quad \Leftrightarrow \quad 1+ \frac{z}{k_0} \, \frac{\rm d^2 \, \Phi_\text a^\text{opt}(\nu)}{\rm d^2  \nu} =0
\label{eq:second-der}
\ee
Solving equation \eqref{eq:second-der} gives $\nu_\text{c}(z)$, which represents
the critical aperture point that generates the caustic at propagation distance $z$.
The caustic curve is then obtained from \eqref{eq:rays} as 
$x_\text{c}(z) \eqdef x_{\nu_\text{c}(z)}(z)$.
In wave optics, the function $x_\text{c}(z)$ corresponds to the {\em bright trajectory} 
or ``backbone" of the beam,  over which multiple rays interfere constructively.

The derivation of the caustic $x_\text{c}(z)$ requires the explicit expression of 
$\Phi_\text a^\text{opt}(\nu)$. It can be inferred from \eqref{eq:opt-Ea} 
that, for a fixed receiver point,  the phase $-\Phi_\text a^\text{opt}(\nu)$ of  the optimal aperture field is the negative of the phase of the kernel $K_\text{r}(\nu)$.
On the basis of Proposition~\ref{prop:1}, the  knife-edge factor $F(z_\text{r},x_\text{r},\nu)$
(i.e., the error-function term due to the knife edge obstacle) is a slowly varying function, which 
acts primarily as a smooth amplitude apodization, thus yielding a negligible contribution to 
the phase of $K_\text{r}(\nu)$. Basically, any phase variation from $F(z_\text{r},x_\text{r},\nu)$ 
is confined to a narrow Fresnel transition near the geometric shadow boundary. Therefore, the rapidly varying factor of 
$K_\text{r}(\nu)$ turns out to be represented by the complex exponential
$e^{-j \, \frac{k_0}{2 \, z_\text{r}} \, (\nu-x_\text{r})^2}$. Consequently, one may write
\be
\Phi_\text a^\text{opt}(\nu) \approx  - \frac{k_0}{2 \, z_\text{r}} \, (\nu-x_\text{r})^2 + \text{const.}
\ee
which is  quadratic in $\nu$ and, hence, its second-order derivative 
\be
\frac{\rm d^2 \, \Phi_\text a^\text{opt}(\nu)}{\rm d^2  \nu} =- \frac{k_0 }{ z_\text{r}}
\label{eq:2-der}
\ee
is constant. By substituting \eqref{eq:2-der} in the envelope condition
\eqref{eq:second-der}, one gets $z=z_\text{r}$. The corresponding caustic reads 
as $x_\text{c}(z)=x_\text{r}$. We may finally infer that all rays converge to the single 
point $R$ and the caustic degenerates into a fold point, i.e., a focus. Therefore, the optimal field
distribution \eqref{eq:opt-Ea} generates a {\em focusing beam}, i.e., 
energy is focused constructively at the point $R$ where the receiver is located.

\begin{figure}[t]
\centering
\begin{subfigure}[b]{0.235\textwidth}
\centering
\includegraphics[width=\textwidth]{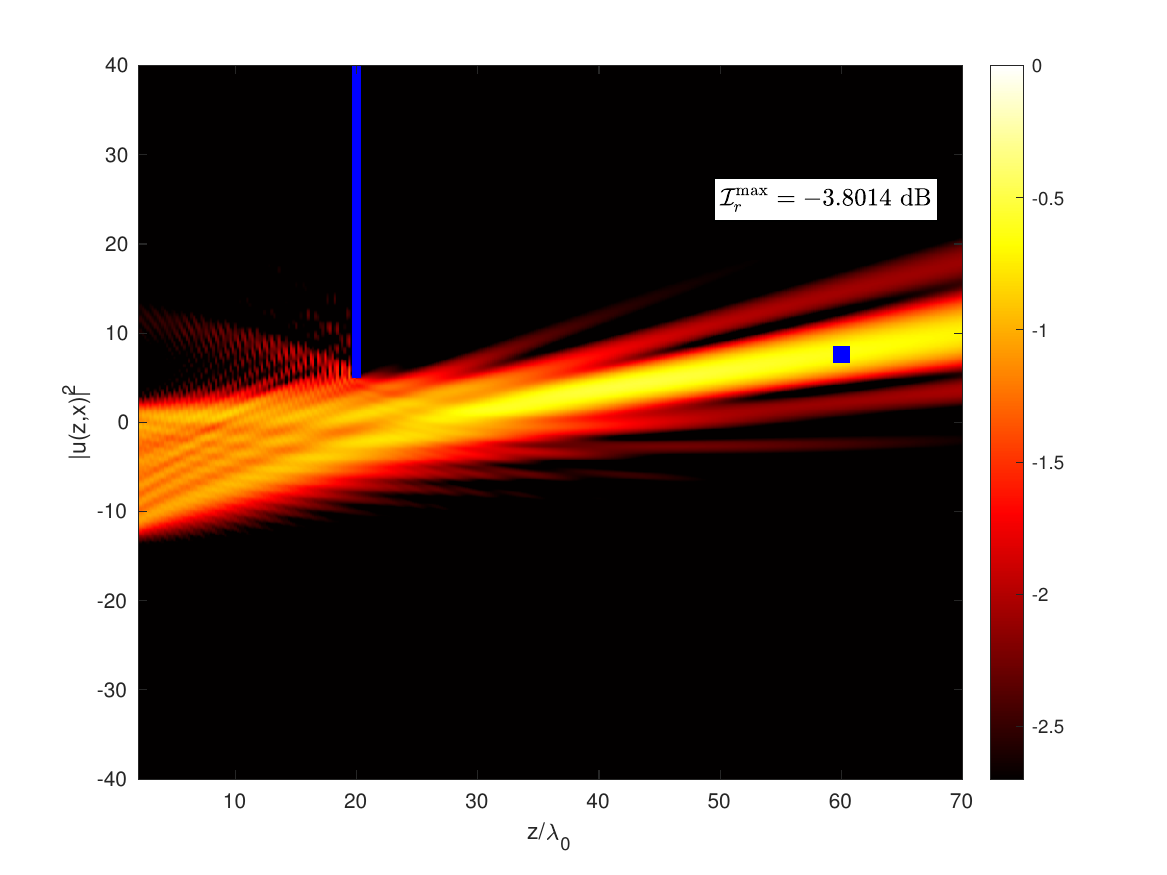}
\caption[]%
{{\small $x_\text{b}^{(1)}=5 \, \lambda_0$}}    
\label{fig:fig_2_11}
\end{subfigure}
\begin{subfigure}[b]{0.235\textwidth}  
\centering 
\includegraphics[width=\textwidth]{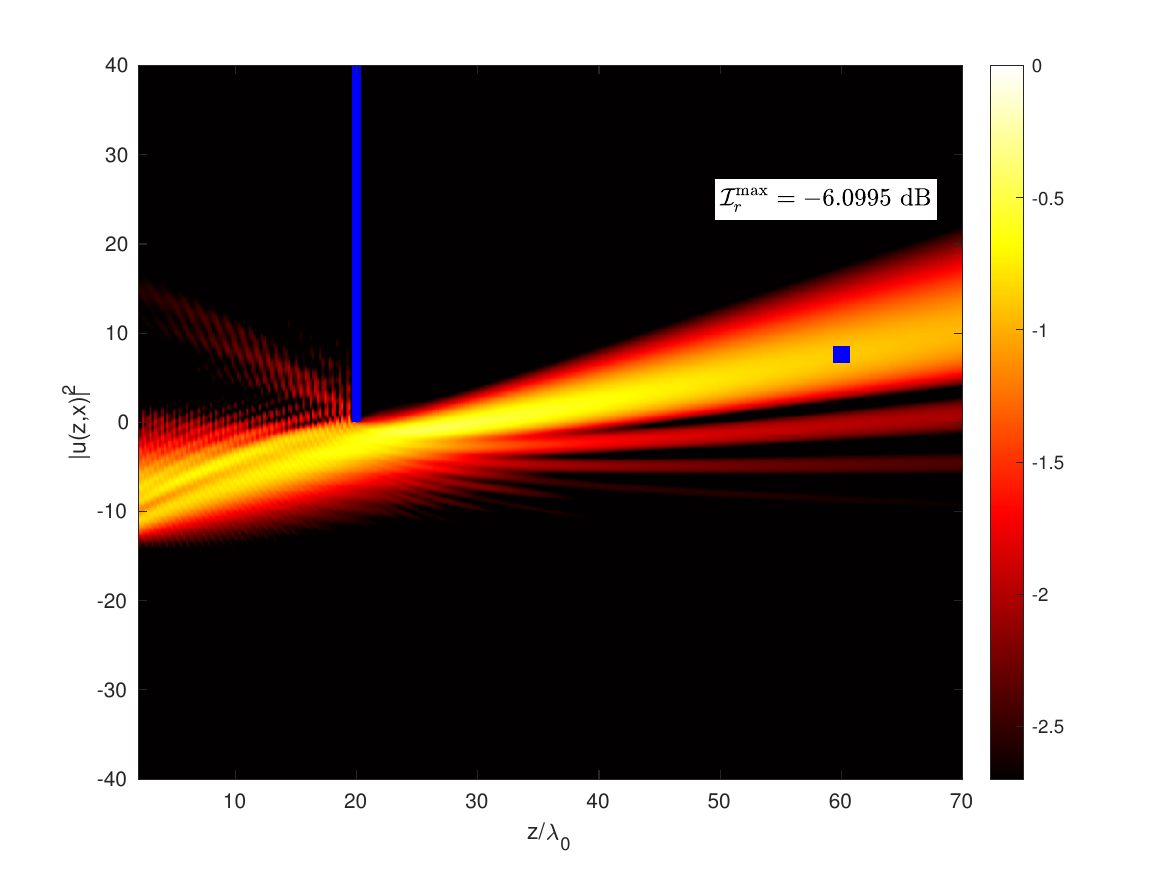}
\caption[]%
{{\small $x_\text{b}^{(1)}=0$}}    
\label{fig:fig_2_12}
\end{subfigure}
\vskip\baselineskip
\begin{subfigure}[b]{0.235\textwidth}   
\centering 
\includegraphics[width=\textwidth]{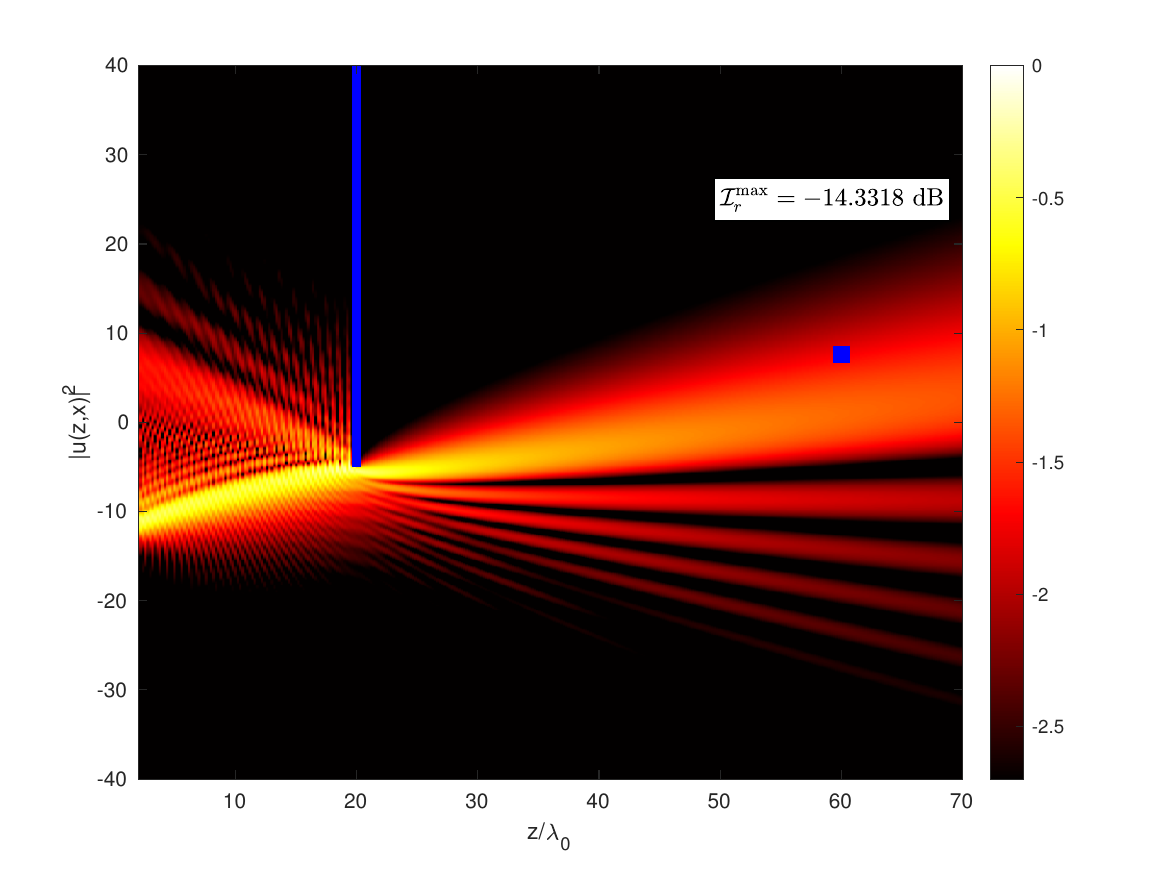}
\caption[]%
{{\small $x_\text{b}^{(1)}=-5 \, \lambda_0$}}    
\label{fig:fig_2_21}
\end{subfigure}
\begin{subfigure}[b]{0.235\textwidth}   
\centering 
\includegraphics[width=\textwidth]{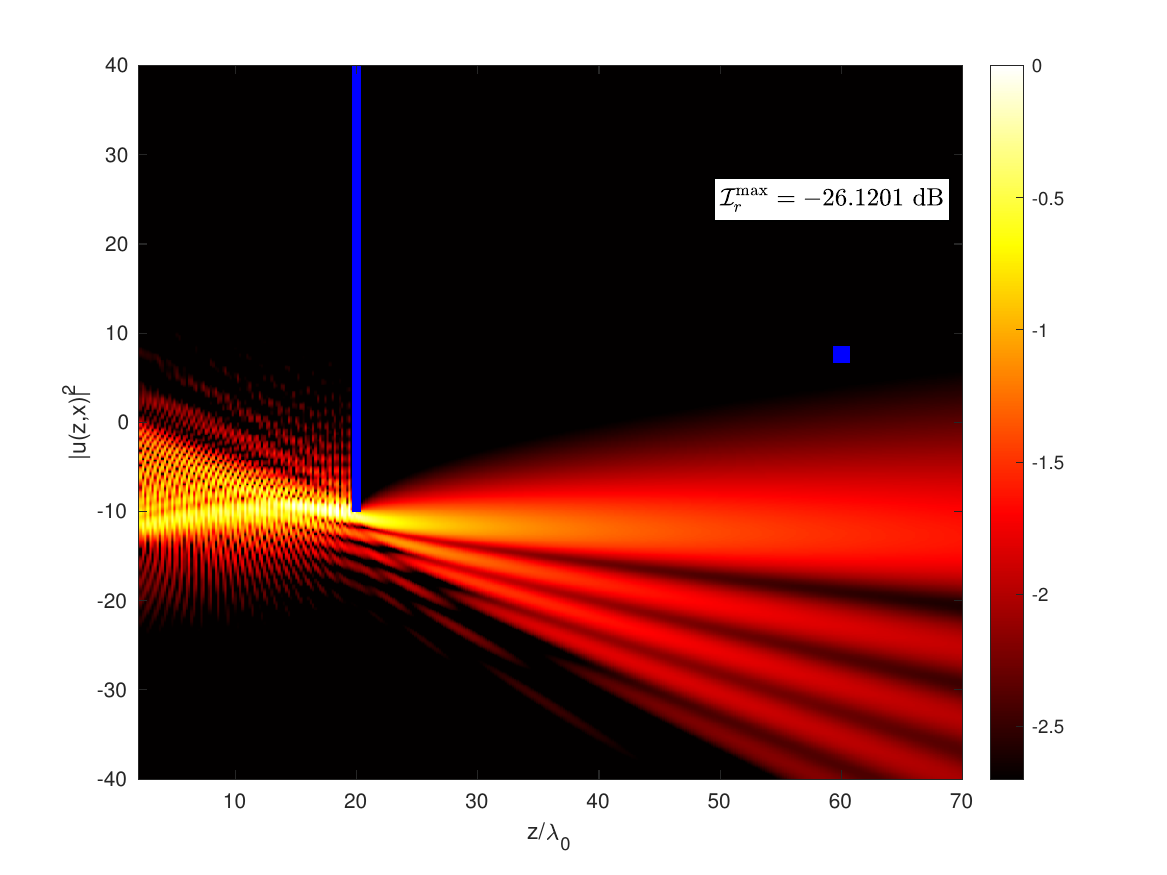}
\caption[]%
{{\small $x_\text{b}^{(1)}=-10 \, \lambda_0$}}    
\label{fig:fig_2_22}
\end{subfigure}
\caption[]%
{\small Intensity distribution (in $log_{10}$ scale) of the transmitted EM field as a function of $z$ and $x$.
The aperture extends from  $x_\text{a}^{(1)}=-13 \, \lambda_0$ to 
$x_\text{a}^{(2)}=2 \, \lambda_0$, partially obstructed by a semi-infinite
knife-edge obstacle  located at distance $z_\text{b}=20 \, \lambda_0$ from the aperture, 
for different values of $x_\text{b}^{(1)}$.
The blue thick line represents the obstacle, and the blue square indicates the receiver located 
at  $(60 \, \lambda_0, 7.5 \, \lambda_0)$.
Spatial coordinates are normalized with respect to $\lambda_0$.} 
\label{fig:fig_2}
\vspace{-5mm}
\end{figure}

\vspace{-2mm}
\section{Numerical results}
\vspace{-2mm}

Fig.~\ref{fig:fig_2}, shows the
intensity of the EM field radiated by the transmit aperture
as a function of $(z, x)$.
In this numerical example, the obstruction is modeled as a semi-infinite knife-edge obstacle, i.e., 
$x_\text{b}^{(2)}=+\infty$, which is located at a distance $z_\text{b}=20 \, \lambda_0$ from the aperture. The finite-size aperture is defined by 
$x_\text{a}^{(1)}=-13 \, \lambda_0$ and $x_\text{a}^{(2)}=2 \, \lambda_0$, which yields an overall  aperture width of   $\Delta x_\text{a} = 15 \, \lambda_0$.
The receiver is positioned at coordinates $(z_\text{r},x_\text{r})
=(60 \, \lambda_0, 7.5 \, \lambda_0)$. 
The position of the bottom edge of the obstacle takes on the values 
$x_\text{b}^{(1)}=5 \, \lambda_0$ in Fig.~\ref{fig:fig_2_11}, 
$x_\text{b}^{(1)}=0$ in Fig.~\ref{fig:fig_2_12}, 
$x_\text{b}^{(1)}=-5 \, \lambda_0$ in Fig.~\ref{fig:fig_2_21},
and $x_\text{b}^{(1)}=-10 \, \lambda_0$ in Fig.~\ref{fig:fig_2_22}.
In Fig.~\ref{fig:fig_2_11}, the obstacle does not 	obstructs the transmit aperture, 
whereas  the aperture is partially dimmed of  about $13 \%$, $47 \%$, and $80 \%$
in Fig.~\ref{fig:fig_2_12}, Fig.~\ref{fig:fig_2_21}, and Fig.~\ref{fig:fig_2_22},
respectively.
It can be observed from Fig.~\ref{fig:fig_2_11} that,  
when the aperture is unobstructed, the optimal aperture field \eqref{eq:opt-Ea}
yields coherent phasing of all aperture contributions at $R$.
On the other hand,  when part of the aperture is obstructed, 
the optimal aperture field is still a matched filter, which is 
now matched to the diffracted field generated by the 
unobstructed part of the aperture.
Henceforth, in Fig.~\ref{fig:fig_2_12}, Fig.~\ref{fig:fig_2_21}, and Fig.~\ref{fig:fig_2_22},
the field intensity at the receiver is reduced compared to the unobstructed 
case due to the limited visible aperture.  Nevertheless, thanks to the matched-filter structure
of $\E_\text a^\text{opt}(\nu)$, beam focusing remains effective even under 
partial obstruction of the aperture as long as a non-negligible fraction of the aperture 
remains visible and contributes coherently at the receiver.

\vspace{-2mm}
\section{Conclusions}
\label{sec:conc}
\vspace{-2mm}

This work presented a framework for optimal aperture field synthesis in near-field communications under partial LoS obstruction. Using a physically consistent knife-edge diffraction model, the propagation channel was formulated as a linear operator, with the optimal solution given by a matched filter to the diffraction kernel. Numerical results showed that beam focusing remains effective despite partial obstruction, provided that the visible aperture is sufficiently large. 
Our formulation provides a link between wave propagation, signal processing, and hardware design, offering a way for efficient beamforming with continuous apertures in NLoS channels.


\clearpage

\end{document}